\journal{}
\newcommand{\crit}{\ensuremath{\mathcal{J}}} 
\newcommand{\df}{{\ensuremath{{\Delta f}}}}
\newcommand{\Z}{\ensuremath{\mathbb{Z}}}
\newcommand{\RLinf}{\ensuremath{\mathcal{RL}_\infty}}
\newcommand{\RHinf}{\ensuremath{\mathcal{RH}_\infty}}
\newcommand{\Fig}{Fig.}
\newcommand{\Sec}{Sec.}
\newtheorem{thm}{Theorem}
\newtheorem{lemma}[thm]{Lemma}
\newtheorem{assum}[thm]{Assumption}
\newtheorem{remark}[thm]{Remark}
\newcommand{\Colorone}{solid blue}%
\newcommand{\Colortwo}{dashed red}%
\newcommand{\Colorthree}{dash-dotted green}%
\newcommand{\Colorfour}{dotted magenta}%
\newcommand{\Colorfive}{solid cyan}%
\title{\LARGE \bf 
Sparse Iterative Learning Control with Application to a Wafer Stage: \\ Achieving Performance, Resource Efficiency, and Task Flexibility}
\begin{document}
\begin{frontmatter}
\title{%
Sparse Iterative Learning Control with Application to a Wafer Stage: \\ Achieving Performance, Resource Efficiency, and Task Flexibility%
}

\author[TUE]{Tom Oomen}
\author[KTH]{Cristian R. Rojas}
 \address[TUE]{Eindhoven University of Technology, Faculty of Mechanical Engineering, Control Systems Technology group, PO Box 513, 5600MB Eindhoven, The Netherlands.}
\address[KTH]{Automatic Control Lab, Electrical Engineering, KTH -- Royal Institute of Technology, S-100 44 Stockholm, Sweden.}

 \begin{abstract} 
Trial-varying disturbances are a key concern in Iterative Learning Control (ILC) and may lead to inefficient and expensive implementations and severe performance deterioration. The aim of this paper is to develop a general framework for optimization-based ILC that allows for enforcing additional structure, including sparsity. The proposed method enforces sparsity in a generalized setting through convex relaxations using $\ell_1$ norms. The proposed ILC framework is applied to the optimization of sampling sequences for resource efficient implementation, trial-varying disturbance attenuation, and basis function selection. The framework has a large potential in control applications such as mechatronics, as is confirmed through an application on a wafer stage. 

\end{abstract}
\end{frontmatter}
\section{Introduction} %

Iterative Learning Control (ILC) enables significant performance improvements for batch-to-batch control applications, by generating a command signal that compensates for repetitive disturbances through learning from previous iterations, also called batches or trials. Theoretical and implementation aspects, including convergence, causality, and robustness, have been addressed in, e.g., \cite{BristowThaAll2006}, \cite{AhnMooChe2007}, \cite{RogersGalOwe2007}, \cite{Owens2016}, \cite{PipeleersMoo2014}. Furthermore, successful applications have been reported in, e.g., robotics \cite{WallenDreGunRob2014}, mechatronics \cite{BolderZunKoeOom2017}, manufacturing \cite{HoelzleBar2016}, building control \cite{PengSunZhaTom2016}, nuclear fusion \cite{FeliciOom2015}, and rehabilitation \cite{FreemanHugBurChaLewRog2009}. However, several disadvantages of present ILC frameworks that limit further applications include 
\begin{inparaenum}[i)]
\item high implementation cost due to highly unstructured command signals, which are expensive to implement;\label{item:1}
\item amplification of trial-varying disturbances, including measurement noise;\label{item:2}
\item inflexibility to changing reference trajectories.\label{item:3}
\end{inparaenum}
The aim of the present paper is to develop an ILC framework that addresses these aspects \ref{item:1})-\ref{item:3}) by enforcing sparsity.

Regarding \ref{item:1}) ILC typically generates signals that require a large number of command signal updates thus leading to an expensive implementation. ILC directly learns from measured signals that are contaminated by trial-varying disturbances such as measurement noise. These trial-varying disturbances are often modeled as a realization of a stochastic process \cite{Ljung1999}. As a result, the ILC command signals have infinite support. In sharp contrast, command signals that are obtained through traditional feedforward designs, including \cite{LambrechtsBoeSte2005}, have finite support and are highly sparse. Command signals with a high number of non-zero elements, or another appropriate structural constraint, may lead to a prohibitively expensive implementation, e.g., in wireless sensor networks, wireless control applications, or embedded platforms with shared resources \cite{GoossensAzeChaDevGooKoeLiMirMolBeyNelSin2013}. Note that this is a different aspect than the actual computation of the command signal itself, which can be done in between subsequent tasks, see \cite{ZundertBolKoeOom2016b} for results in this direction.

Regarding \ref{item:2}), ILC typically amplifies trial-varying disturbances. In fact, typical ILC approaches amplify these disturbances by a factor of two, as is shown in the present paper. Approaches to attenuate trial-varying disturbances include norm-optimal ILC with appropriate input weighting \cite{BristowThaAll2006}, higher-order ILC for addressing disturbances with trial-domain dynamics \cite{GunnarssonNor2006}, and stochastic approximation-based ILC \cite{ButcherKar2011}. Also, a wavelet filtering-based approach is presented in \cite{MerryMolSte2008}, where a certain noise attenuation is achieved by setting certain wavelet coefficients to zero. In the present paper, a different approach is pursued to attenuate disturbances, where also wavelets immediately fit into the formulation, yet the sparsity can be enforced in an optimal way.

Regarding \ref{item:3}), changing reference signals typically lead to performance degradation of ILC algorithms \cite{BoerenBarKokOom2016}, since these essentially constitute trial-varying disturbances. This is in sharp contrast to traditional feedforward designs \cite{LambrechtsBoeSte2005} and is widely recognized in ILC designs.  A basis task approach is proposed in \cite{HoelzleAllWag2011}, where the command input is segmented. A basis function framework is developed and applied in \cite{WijdevenBos2010}, \cite{MeulenTouBos2008}, \cite{BolderOomKoeSte2014c} using polynomial basis functions, which is further extended to rational basis functions in \cite{ZundertBolOom2015}. These basis functions are typically selected based on prior information, e.g., based on the approach in \cite{LambrechtsBoeSte2005}, and trial-and-error. 

In model estimation and signal processing, the use of measured signals has comparable consequences, which has led to new regularization-based approaches that enforce sparsity. Early approaches include the non-negative garrote \cite{Breiman1995} and Least Absolute Shrinkage and Selection Operator (LASSO) \cite{Tibshirani1996}. These are further generalized in \cite{TibshiraniTay2011}, \cite{HastieTibWai2015}, \cite{BuhlmannGee2011}, \cite{BachJenMaiObo2011}. Related applications in system identification include \cite{RojasHja2011}, \cite{OhlssonLjuBoy2010}.

Although important developments have been made in ILC and several successful applications have been reported, present approaches do not yet exploit the potential of enforcing additional structure and sparsity. The aim of the present paper is to develop a unified optimization-based approach to ILC that allows for explicitly enforcing structure and sparsity, enabling improved resource efficiency, disturbance attenuation, and flexibility to varying reference signals. The approach employs convex relaxations, enabling the use of standard optimization routines. 

The main contribution of the present paper is a unified framework to sparse ILC. As subcontributions, trial-varying disturbances are analyzed in detail for explicit ILC algorithms (\Sec~\ref{sec:analyzeexplicitILC}). Subsequently, a general optimization-based framework to sparse ILC is developed (\Sec~\ref{sec:spilc}), including many specific cases that are relevant to ILC applications. The results are confirmed through an application to a wafer stage system (\Sec~\ref{sec:examples}). Related developments to the results in the present paper include the use of sparsity in control, where the main results have been related to Model Predictive Control (MPC), see \cite{AnnergrenHanWah2012}, \cite{KhoshfetratOhlLju2013}, \cite{Gallieri2015}. 

\emph{Notation:} Throughout, $\|x\|_{\ell_p}$ denotes the usual $\ell_p$ norm, $p \in \mathbb{Z}_{> 0}$. Also, $\|x\|_0 = \sum_{i} \mathbf{1} (x_i \neq 0)$, i.e., the cardinality of $x$. Note that $\|x\|_0$ is not a norm, since it does not satisfy the homogeneity property. It relates to the general $p$-norm by considering the limit $p\rightarrow 0$ of $\|x\|_p$. In addition, $\|\tf[X]\|_{\mathcal{L}_\infty}$ and $\|\tf[X]\|_{\mathcal{H}_\infty}$ denote the usual $\mathcal{L}_\infty$ and $\mathcal{H}_\infty$ norms of discrete time systems, respectively. Throughout, $J$ denotes a system that maps an input space to an output space, operating either over finite or infinite time, which follows from the context. In certain cases, the system is assumed linear, time invariant, and scalar, with transfer function representation $\tf[J]$. The spectrum of a signal $x$ is denoted $\phi_{x}$. 

\section{Problem formulation}\label{sec:probform}

\begin{figure}%
\centering
\includegraphics[width=.9\linewidth]{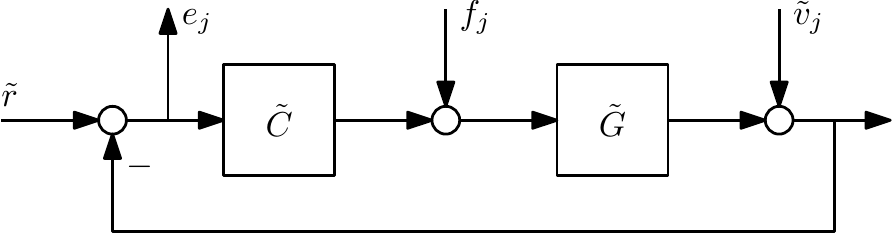}%
\caption{Parallel ILC structure \eqref{eq:parallelILC} as an example of \eqref{eq:generalILCsystem}.}
\label{fig:parallelILC}
\end{figure}

Consider the ILC system
\begin{equation}\label{eq:generalILCsystem}
e_{j} = r -J f_j -v_j
\end{equation}
be given, where $e_j\in \ell_2$ denotes the error signal to be minimized, $r\in \ell_2$ is the reference signal, $f_j \in \ell_2$ denotes the command signal, and $v_j \in \ell_2$ represents trial-varying disturbances, including measurement noise. Here and in the sequel, all signals are tacitly assumed to have appropriate dimensions. Furthermore, $J$ represents the true system, either open-loop or closed-loop, with causal and stable transfer function $\tf[J] \in \RHinf$. The index $j \in \Z_{\ge 0}$ refers to the trial number. Throughout, the command signal $f_{j+1}$ is generated by an ILC algorithm 
\begin{equation}\label{eq:generalILCupdate}
f_{j+1} = F(f_j, e_j),
\end{equation}
where the ILC update $F$ is defined in more detail later on. The general setup \eqref{eq:generalILCsystem} encompasses the parallel ILC setup in Figure~\ref{fig:parallelILC}, where 
\begin{equation}\label{eq:parallelILC}
e_j = S\tilde r - SGf_j - S \tilde v_j
\end{equation}
where $S$ follows from its transfer function $\tf[S] = \frac{1}{1+\tf[G]\tf[C]}$, $r = S\bar r$, $J = SG$, $v_j = S \tilde v_j$, and $\tf[C]$, $\tf[G]$ are assumed to be linear.
 
From \eqref{eq:generalILCupdate} and \eqref{eq:generalILCsystem}, it is immediate that the trial-varying disturbance $v_j$ directly affects the ILC command signal. In view of this observation, the problem investigated in this paper is to develop an ILC algorithm \eqref{eq:generalILCupdate} that satisfies the following  requirements:
\begin{compactenum}[{R}1)]
\item the iteration \eqref{eq:generalILCsystem}-\eqref{eq:generalILCupdate} is convergent over $j$;\label{item:converge}
\item the iteration \eqref{eq:generalILCsystem}-\eqref{eq:generalILCupdate} leads to a small error $e_j$ in the presence of trial-invariant disturbances $r$ and trial-variant disturbances $v_j$;\label{item:disturbanceattenuation}
\item the resulting command signal $f_j$ has a certain structure, including\label{item:resourceefficient}
\begin{compactenum}
\item \label{eq:fjspare} a small $\|f_j\|_0$, and/or,
\item \label{eq:dfjspare} a piecewise constant $f_j$ with a small number of jumps.
\end{compactenum}
\end{compactenum}
Here, R\ref{item:converge} is a basic requirement for any ILC algorithm and ensures stability in the trial domain, in addition to the assumed stability in the time domain that is guaranteed by stability of $J$ in \eqref{eq:generalILCsystem}, see also \cite{RogersGalOwe2007} for the stability of such two-dimensional systems. Requirement~R\ref{item:disturbanceattenuation} essentially states that the ILC algorithm should effectively compensate for $r$, while avoiding amplification of trial-varying disturbances $v_j$. Requirement~R\ref{item:resourceefficient} is imposed to enable resource-efficient implementations in terms of sampling or communication requirements, depending on the particular application requirements.

\section{Analysis of Trial-Varying Disturbances in Explicit ILC}\label{sec:analyzeexplicitILC}

In this section, trial-varying disturbances in ILC algorithms are analyzed. In particular, explicit linear ILC algorithms of the general form
\begin{equation}\label{eq:freqilcupdate}
f_{j+1} = Q (f_j + L e_j)
\end{equation}
are considered. The infinite time scalar case is considered, where $Q: \ell_2 \mapsto \ell_2$ and $L: \ell_2 \mapsto \ell_2$. Here, $Q$ and $L$ have associated transfer functions $\tf[Q] \in \RLinf$ and $\tf[L] \in \RLinf$. Note that $\tf[J] \in \RHinf$ reflects causality and stability of the system. The fact that $\tf[Q] \in \RLinf$ and $\tf[L] \in \RLinf$ reflects that typical ILC algorithms are typically non-causal, and are usually implemented such that bounded solutions are obtained through finite-time preview or via stable inversion through a bilateral $Z$-transform \cite{ZundertBolKoeOom2016b}. %

The trial-varying disturbance $v_j$ in \eqref{eq:generalILCsystem} will propagate throughout the iterations through the iteration-domain update \eqref{eq:freqilcupdate}. The following assumption is widely adopted \cite{Ljung1999}.
\begin{assum}\label{assum:noise}
Let $v_j = H n_j$, where $n_j$ is i.i.d. zero-mean white noise with variance $\lambda_e$, $\tf[H]$ monic and bistable.
\end{assum}
Clearly, $v_j$ typically does not have compact support. As a result, $f_{j+1}$ will not have compact support in general due to ILC algorithm \eqref{eq:freqilcupdate}.

To enable a more detailed analysis, the following auxiliary result provides a suitable condition to guarantee that the iteration defined by \eqref{eq:generalILCsystem} and \eqref{eq:freqilcupdate} converges.

\begin{thm}\label{thm:contractionmap}
The iteration defined by \eqref{eq:generalILCsystem} - \eqref{eq:freqilcupdate} converges monotonically in the $\ell_2$ norm to a fixed point $f_\infty$ and resulting $e_\infty$ iff
\begin{equation}
\|\tf[Q](1-\tf[L]\tf[J])\|_{\mathcal{L}_\infty} < 1.
\end{equation}
\end{thm}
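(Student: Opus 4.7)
The plan is to reduce the iteration to an affine recursion on $\ell_2$, identify the relevant contraction constant with an $\mathcal{L}_\infty$ norm of the transfer function via Parseval, and then apply a standard contraction-mapping argument for sufficiency together with a frequency-localized test signal for necessity.

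First, I eliminate $e_j$ by substituting $e_j = r - Jf_j - v_j$ into $f_{j+1} = Q(f_j + Le_j)$, yielding the affine iteration
\begin{equation}
f_{j+1} = Q(I - LJ)f_j + QL(r - v_j).
\end{equation}
Setting $T := Q(I-LJ)$ and (for the fixed-point statement) fixing $v_j$ to a stationary realisation, any fixed point must satisfy $(I-T)f_\infty = QL(r - v)$, so that the error with respect to $f_\infty$ obeys the homogeneous recursion $f_{j+1} - f_\infty = T(f_j - f_\infty)$. The $\ell_2$-convergence behaviour is therefore governed entirely by the induced norm of $T$ acting on $\ell_2$.

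Second, I invoke the standard fact that for a scalar LTI operator with transfer function $\tilde{T}(z) = \tilde{Q}(z)\bigl(1 - \tilde{L}(z)\tilde{J}(z)\bigr) \in \mathcal{RL}_\infty$, the induced $\ell_2 \to \ell_2$ operator norm equals $\|\tilde{T}\|_{\mathcal{L}_\infty}$. This is a consequence of Parseval's identity applied to the DTFT together with the fact that pointwise multiplication on $\mathcal{L}_2$ of the unit circle has operator norm equal to the essential supremum of the multiplier. Thus $\|T\|_{\ell_2\to\ell_2} = \|\tilde{Q}(1-\tilde{L}\tilde{J})\|_{\mathcal{L}_\infty}$.

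Third, for sufficiency I assume $\|\tilde{Q}(1-\tilde{L}\tilde{J})\|_{\mathcal{L}_\infty} < 1$. Then $T$ is a strict contraction on $\ell_2$, and the affine map $f \mapsto Tf + QL(r-v)$ is likewise a strict contraction; the Banach fixed-point theorem supplies a unique $f_\infty$, and the estimate $\|f_{j+1}-f_\infty\|_{\ell_2} \le \|T\|\,\|f_j-f_\infty\|_{\ell_2}$ gives strict monotone decay in the $\ell_2$ norm. The corresponding error signal is recovered through $e_\infty = r - Jf_\infty - v$. For necessity I argue by contrapositive: assuming $\|\tilde{T}\|_{\mathcal{L}_\infty} \ge 1$, for every $\varepsilon > 0$ there is an interval of frequencies on which $|\tilde{T}(e^{i\omega})| \ge 1-\varepsilon$. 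Choosing an initial displacement $f_0 - f_\infty$ whose spectrum is supported in such a band yields $\|T(f_0-f_\infty)\|_{\ell_2}/\|f_0-f_\infty\|_{\ell_2} \ge 1-\varepsilon$, so no uniform contraction rate below one is achievable, and the monotone $\ell_2$ convergence to $f_\infty$ must fail for some initial condition.

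The main obstacle is the necessity direction: since the $\mathcal{L}_\infty$ norm is in general not attained by a single $\ell_2$ signal, one cannot simply exhibit a maximising $f_0$. The care required is to construct a sequence of band-limited initial displacements that approaches the supremum, thereby ruling out any uniform monotone decrease. Once this approximation argument is in place, the equivalence between contraction in $\ell_2$ and the frequency-domain condition closes the proof.
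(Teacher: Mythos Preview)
Your proposal is correct and follows essentially the same route as the paper: substitute \eqref{eq:generalILCsystem} into \eqref{eq:freqilcupdate} to obtain the affine recursion $f_{j+1} = Q(I-LJ)f_j + QLr - QLv_j$, identify the induced $\ell_2$ gain of $Q(I-LJ)$ with $\|\tilde Q(1-\tilde L\tilde J)\|_{\mathcal{L}_\infty}$ (the paper cites \cite[Theorem~4.4]{ZhouDoyGlo1996} for this), and invoke the Banach fixed-point theorem. Your treatment is in fact more explicit than the paper's, which leaves the necessity direction implicit; your band-limited approximation argument is the standard way to unpack it.
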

\begin{proof}
Substituting \eqref{eq:generalILCsystem} into \eqref{eq:freqilcupdate} leads to 
\begin{math}
f_{j+1} = Q(I-LJ)f_j + QLr - QLv_j.
\end{math}
Using transfer function representations and subsequent application of the Banach fixed-point theorem in conjunction with  \cite[Theorem 4.4]{ZhouDoyGlo1996}  yields the desired result. 
\end{proof}
Note that Theorem~\ref{thm:contractionmap} allows for non-causal ILC algorithms, i.e., $Q, L \in \RLinf$. This is more general compared to related analyses, including \cite[Chapter 3]{Moore1993}, which only allow for causal ILC algorithms by restricting to the $\mathcal{H}_\infty$ norm. 

The following result is the main result of this section and reveals the propagation of noise in the iteration defined by \eqref{eq:generalILCsystem} and \eqref{eq:freqilcupdate}.
\begin{thm}\label{thm:noiseanalysis}
Given the system \eqref{eq:generalILCsystem} and ILC update \eqref{eq:freqilcupdate} with $f_0 = 0$,  Assumption~\ref{assum:noise}, and that the iteration is stable in the sense of Theorem~\ref{thm:contractionmap}, then, 
\begin{equation}\label{eq:limiterrorspectrum}
\textstyle
\phi_{e_\infty} =
\left| 
\frac{1- \tf[Q]}{1-\tf[Q](1-\tf[L]\tf[J])}
\right|^2 \phi_r 
+
\left(
1
+
\frac{\left|
\tf[J]\tf[Q]\tf[L]
\right|^2}{1-|\tf[Q](1-\tf[L]\tf[J])|^2}
\right)\phi_v.
\end{equation}
\end{thm}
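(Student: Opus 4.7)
The plan is to convert the ILC recursion into a first-order linear update in $f_j$, split it by superposition into a deterministic part driven by $r$ and a stochastic part driven by $v_j$, and then compute the limiting power spectrum of each contribution to $e_\infty$ separately.

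Substituting \eqref{eq:generalILCsystem} into \eqref{eq:freqilcupdate} gives
\begin{equation*}
f_{j+1} = M f_j + QL\,r - QL\,v_j, \qquad M := Q(I-LJ),
\end{equation*}
with $f_0=0$. Theorem~\ref{thm:contractionmap} provides $\|\tilde M\|_{\mathcal{L}_\infty}<1$, so $|\tilde M(e^{i\omega})|<1$ pointwise in $\omega$ and any geometric series in $\tilde M$ converges. By linearity, write $f_j = f_j^{(r)} + f_j^{(v)}$ with both components starting from zero and driven respectively by $QLr$ and $-QL v_j$, and split $e_j = e_j^{(r)} + e_j^{(v)}$ with $e_j^{(r)} := r - Jf_j^{(r)}$ and $e_j^{(v)} := -Jf_j^{(v)} - v_j$.

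For the deterministic part, $f_j^{(r)}$ converges to the unique fixed point $f_\infty^{(r)} = (I-M)^{-1} QL\,r$, so $\tilde e_\infty^{(r)} = [(1-\tilde M - \tilde J\tilde Q\tilde L)/(1-\tilde M)]\tilde r$. The short algebraic manipulation $1-\tilde M - \tilde J\tilde Q\tilde L = 1-\tilde Q + \tilde Q\tilde L\tilde J - \tilde J\tilde Q\tilde L = 1-\tilde Q$ reduces this to $[(1-\tilde Q)/(1-\tilde Q(1-\tilde L\tilde J))]\tilde r$, whose squared magnitude yields the first summand in \eqref{eq:limiterrorspectrum}. For the stochastic part, unrolling the recursion gives $f_j^{(v)} = -\sum_{k=0}^{j-1} M^k QL\, v_{j-1-k}$, so
\begin{equation*}
e_j^{(v)} = -v_j + JQL\sum_{k=0}^{j-1} M^k v_{j-1-k}.
\end{equation*}
Under Assumption~\ref{assum:noise} the trial-indexed sequence $\{v_j\}$ is mutually independent with common spectrum $\phi_v$, so all cross-spectra between distinct trial indices vanish and
\begin{equation*}
\phi_{e_j^{(v)}} = \phi_v + |\tilde J\tilde Q\tilde L|^2 \phi_v \sum_{k=0}^{j-1} |\tilde M|^{2k}.
\end{equation*}
Letting $j\to\infty$ and summing the geometric series produces the second summand. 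Since $r$ is deterministic and thus uncorrelated with $\{v_j\}$, the two contributions add to give \eqref{eq:limiterrorspectrum}.

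The main obstacle is conceptual rather than computational: $e_j$ does not converge pathwise as a stochastic process, because fresh noise $v_j$ keeps entering every trial, so the statement must be interpreted as convergence of the power spectrum $\phi_{e_\infty} := \lim_{j\to\infty}\phi_{e_j}$. I would therefore make this limit explicit and justify interchanging the limit with the infinite sum by the uniform-in-$\omega$ geometric bound $|\tilde M(e^{i\omega})|^2 \le \|\tilde M\|_{\mathcal{L}_\infty}^2 < 1$ guaranteed by Theorem~\ref{thm:contractionmap}.
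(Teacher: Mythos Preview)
Your proof is correct and follows essentially the same approach as the paper: iterate the recursion in transfer-function form, use the geometric series for the trial-invariant contribution, exploit the trial-to-trial independence of $v_j$ to make the cross-spectra vanish, and pass to the limit using $|\tilde M|<1$. The only differences are organizational---you split by superposition and pass directly to the fixed point for the $r$-part, whereas the paper first derives a closed-form for $\phi_{e_j}$ at finite $j$ (its Lemmas in the Appendix) and then lets $j\to\infty$---and your explicit remark that $\phi_{e_\infty}:=\lim_{j\to\infty}\phi_{e_j}$ is a welcome clarification that the paper leaves implicit.
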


Theorem~\ref{thm:noiseanalysis} provides a detailed analysis of the propagation of noise for the general ILC algorithm \eqref{eq:freqilcupdate}. In special cases, the result can be further simplified. For instance, in inverse-model ILC, $\tf[Q]= 1$ and $\tf[L] = \tf[J]^{-1} \in \RHinf$, in which case Theorem~\ref{thm:noiseanalysis}  reveals that
\begin{equation}\label{eq:noiseamplification}
\phi_{e_\infty} = 2 \phi_v.
\end{equation}
The result \eqref{eq:noiseamplification} reveals that the limit error spectrum involves an amplification of the noise spectrum by a factor of two. 

Inclusion of a learning gain $\alpha \in (0,1]$ in inverse-model ILC, i.e., replacing \eqref{eq:freqilcupdate}
 by $f_{j+1} = Q (f_j + \alpha L e_j)$, mitigates the amplification of trial-varying disturbances, i.e., 
\begin{equation}
\phi_{e_\infty} = \left(
1 + \frac{\alpha^2}{2\alpha - \alpha^2}
\right)\phi_v.
\end{equation}
By taking $\alpha \rightarrow 0$, a  first-order Taylor series approximation yields
\begin{equation}\label{eq:rolealpha}
\phi_{e_\infty} \approx \left(
1 + \frac{1}{2} \alpha
\right)\phi_v.
\end{equation}
Hence, choosing $\alpha$ small leads to a limit error $\phi_{e_\infty} = \phi_v$, which intuitively corresponds to the optimal result, since the iteration-domain feedback \eqref{eq:freqilcupdate} cannot attenuate $v_j$ in iteration $j$. An alternative to attenuate $\phi_v$ is to re-design the controller $C$ in \eqref{eq:parallelILC}, which should from a disturbance attenuation perspective be designed such that $\tf[S] \approx \tf[H]^{-1}$, as is advocated in \cite{BoerenBruOom2017}. Note that this affects $J$ in \eqref{eq:generalILCsystem}.%

\begin{remark}
The results in this section rely on infinite time signals and LTI systems. Alternative ILC designs based on finite-time optimization \cite{BristowThaAll2006}, see also the forthcoming section, explicitly address the boundary effects, typically leading to an LTV ILC update \eqref{eq:generalILCupdate}, even if $J$ is LTI. In~\cite{ZundertBolKoeOom2016b}, it is shown that these optimization-based designs are equivalent to a certain linear-quadratic-tracking problem. As a result, the solution reaches a certain stationary value for sufficiently long task lengths, in which case an LTI $L$ and $Q$ can be derived for which the results of Theorem~\ref{thm:noiseanalysis} apply. This also implies that the design of weighting filters for such optimization-based design can be further investigated, as is briefly summarized in the next section.%
\end{remark}

\section{Sparse ILC}\label{sec:spilc}

In this section, the general optimization-based ILC framework is presented that allows for enforcing additional structure compared to alternative ILC structure. In fact, traditional norm-optimal ILC algorithms \cite{BristowThaAll2006} are recovered as a special case. In the next subsection, the general framework is presented and motivated, followed by specific design choices in the subsequent sections.

\subsection{General approach}\label{sec:generalapproach}

Throughout, the criterion
\begin{equation}\label{eq:gencrit}
\begin{split}
\crit(f_{j+1}) = &
\frac{1}{2}
\|W_e e_{j+1}
\|_2^2
+
\frac{1}{2}
\|W_f f_{j+1}
\|_2^2
\\&+
\frac{1}{2}
\|W_{\df} 
\left(
f_{j+1} - f_j
\right)
\|_2^2
+ \lambda \|D f_{j+1} \|_1
\end{split}
\end{equation}
is considered. Here, finite time signals of length $N$ are considered to obtain an optimization problem with a finite number of decision variables, i.e., $e_j, f_j \in \mathbb{R}^N$. The matrices are defined in the sequel and are assumed to have compatible dimensions. In addition, existence of a unique solution is typically assumed, which can be directly enforced by assuming appropriate positive (semi-) definiteness assumptions on the design variables $W_e$, $W_f$, $W_\df$, $D$, and $\lambda$. Also, $e_{j+1}$ in \eqref{eq:gencrit} is considered to be the noise-free prediction $e_{j+1} = r-Jf_{j+1}$. Since also $r$ is unknown, the main idea in ILC is to use this approximation also for $e_j$, leading to 
\begin{equation}\label{eq:iterativeej}
e_{j+1} = e_j - J(f_{j+1} - f_j),
\end{equation} 
where $e_j$ is the measured error signal during trial $j$. Thus, substituting \eqref{eq:iterativeej} into \eqref{eq:gencrit} renders the optimization problem as a function of the known variables $e_j, f_j$, user-defined variables, and the decision variable $f_{j+1}$.

The motivation for considering \eqref{eq:gencrit} is as follows. First, if $\lambda = 0$, then standard norm-optimal ILC is recovered, e.g., as in \cite{GunnarssonNor2001}. In this case, an analytic solution of the form \eqref{eq:freqilcupdate} is directly obtained with
\begin{align}
L &= (J^T  \bar W_e J + W_{\df})^{-1}J^T \bar W_e \label{eq:NOILC1} \\
Q &= (J^T \bar W_e J + \bar W_f + \bar W_{\df})^{-1}(J^T \bar W_e J + \bar W_{\df}),\label{eq:NOILC2}
\end{align}
where $\bar W_e = W_e^T W_e$, $\bar W_f = W_f^TW_f$, and $\bar W_\df = W_\df^TW_\df$.

The second motivation stems from the observation that the terms $\frac{1}{2}
\|W_f f_{j+1}
\|_2^2$
and
$\frac{1}{2}
\|W_{\df} 
\left(
f_{j+1} - f_j
\right)
\|_2^2$
essentially involve a ridge regression or Tikhonov regularization. If $f_j = 0$, then the two terms coincide. If $f_j \neq 0$, i.e., during the ILC iterations, then $W_f$ typically leads to $Q \neq I$ in \eqref{eq:freqilcupdate}, providing robustness with respect to modeling errors \cite{Bristow2008}. Increasing $W_\df$ attenuates trial-varying disturbances, which is similar to reducing $\alpha$ in \eqref{eq:rolealpha}. Note that $W_f$ also plays a small role to decrease trial-varying disturbances, since it essentially leads to a smaller mean-square error. However, it leads to a non-zero limit error $e_\infty$, even in the absence of $v_j$ due to the weight on $f_j$, which coincides with a $\tf[Q] \neq 1$ in Theorem~\ref{thm:noiseanalysis}. 

The third and main motivation for considering the extended criterion \eqref{eq:gencrit} is the additional term $\lambda \|D f_j \|_1$ that is used to enforce sparsity and structure. Note that sparsity is measured directly through the $\ell_0$ norm. However, inclusion of an $\ell_0$ penalty in the criterion  \eqref{eq:gencrit}  leads to a non-convex optimization problem, which in fact is NP-hard, see \cite{Natarajan1995}. The $\ell_1$ norm is a convex relaxation of the $\ell_0$ norm. To see this, note that \eqref{eq:gencrit} is essentially in Lagrangian form. For the purpose of explanation, consider the simplified form by selecting $W_e = I$, $j = 1$, $f_0 = 0$, $W_f = 0$, $W_\df = 0$, and $D = I$. Using \eqref{eq:iterativeej}
\begin{equation}\label{eq:l1simplified}
\crit(f_1) = \frac{1}{2} \|e_0 - J f_1 \|_2^2 + \lambda \|f_1\|_1,
\end{equation}
which is equivalent to the primal optimization problem
\begin{equation}
\begin{aligned}
& \min_{f_1}
& & \frac{1}{2}\|e_0 - J f_1 \|_2^2  \\
& \text{subject to}
& & \|f_1\|_1 \leq t. \label{eq:l1primal}
\end{aligned}
\end{equation}
for the range of $t$ where the constraint in \eqref{eq:l1primal} is active. This implies that for a given value of $\lambda$, there exists a value of $t$ for which \eqref{eq:l1simplified} and \eqref{eq:l1primal} have identical minima.  In this simplified case, the interpretation in \cite{Tibshirani1996} applies to the ILC problem. In particular, the constraint in \eqref{eq:l1primal} is plotted  in \Fig~\ref{fig:motivationellone} in addition to several elliptical contour lines of the objective function in \eqref{eq:l1primal}. The solution to \eqref{eq:l1primal} corresponds to the smallest ellipsoid that touches the rhombus of the constraint. If this happens at the corner, as is common and also in this case, then one of the coefficients is zero and a sparse solution is obtained. 

In contrast, traditional norm-optimal ILC, i.e., corresponding to the solution \eqref{eq:NOILC1} - \eqref{eq:NOILC2}, typically does not lead to a sparse solution with zero entries in $f_1$. To see this, consider a similar simplified case as in \eqref{eq:l1simplified}
\begin{equation}\label{eq:l2simplified}
J(f_1) = \frac{1}{2} \|e_0- J f_1 \|_2^2 + \tau \|f_1\|_2,
\end{equation}
which is again in Lagrangian form. Here, $\tau$ directly relates to the weights in \eqref{eq:gencrit} if $W_f$ and $W_\df$ are selected as the common diagonal case with initialization $f_0 = 0$. The primal optimization problem corresponding to \eqref{eq:l2simplified} is given by
\begin{equation}
\begin{aligned}
& \min_{f_1}
& & \frac{1}{2}\|e_0 - J f_1 \|_2^2  \\
& \text{subject to}
& & \|f_1\|_2 \leq t. \label{eq:l2primal}
\end{aligned}
\end{equation}
In \Fig~\ref{fig:motivationellone}, the constraint is again shown together with the contour lines of the objective function. Due to the lack of corners of the constraint, the presence of zeros in the solution of \eqref{eq:l2primal} is very unlikely in general. Hence, the $\ell_1$ norm promotes sparse solutions, whereas the $\ell_2$ norm in general does not. 

\begin{figure}%
\centering
\includegraphics[width=.75\linewidth]{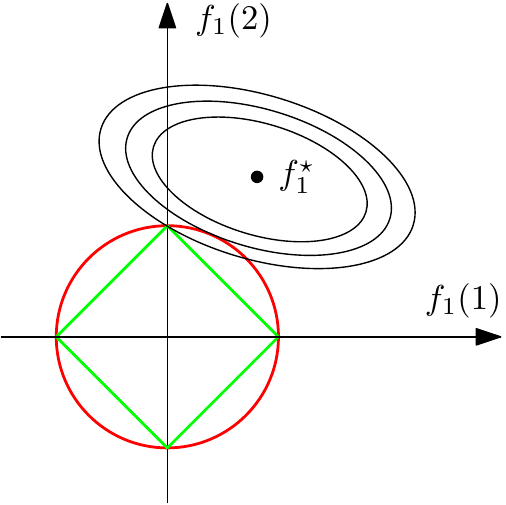}%
\caption{Enforcing sparsity in ILC. Assuming $N = 2$, $f_1$ contains two elements. The constraint set, i.e., the $\ell_1$ ball is plotted in green. In addition, ellipsoidal contour lines corresponding to the objective in \eqref{eq:l1primal} are plotted. The optimal solution is found at the point where the contour line first touches the constraint set, which in this case implies $f_1(1) = 0$, hence $f_1$ is sparse. In contrast, in the ridge regression case of \eqref{eq:l2primal} (whose constraint is shown in red), the solution is not sparse. In particular, this solution is obtained when the contour lines of the objective function in \eqref{eq:l2primal} first touches the constraint set corresponding to the $\ell_2$ ball. In addition, $f_1^{\star}$ denotes the unconstrained solution to the objective function in \eqref{eq:l1primal} and \eqref{eq:l2primal}.}
\label{fig:motivationellone}
\end{figure}

Finally, it is remarked that if $\lambda > 0$, then the solution to \eqref{eq:gencrit} typically cannot be obtained in closed-form as in \eqref{eq:NOILC1}-\eqref{eq:NOILC2}. Interestingly, a unique solution to \eqref{eq:gencrit} exists due to convexity. The optimization problem \eqref{eq:gencrit} can be readily solved using general convex optimizers. In addition, several efficient algorithms have been developed, see, e.g., \cite[Chapter 5]{HastieTibWai2015} for an overview. Several of such algorithms provide the entire solution path as a function of $\lambda$. The particular algorithm depends on the choice of $D$, but several relevant choices are outlined below.

\subsection{Sparse command signals via lasso}\label{sec:lassoilc}

In view of requirement R\ref{eq:fjspare} in \Sec~\ref{sec:probform}, in certain applications it is required to have a sparse command signal $f_j$. To this end, $D$ in \eqref{eq:gencrit} can be selected as $D = I$. As a result, the value of $\lambda > 0$ will dictate the sparsity of the solution. In addition, in this classical lasso approach, $W_f$ and $W_\df$ may be selected as $W_f = 0$ and $W_\df = 0$, i.e., traditional design guidelines for norm-optimal ILC regarding positive definiteness of these matrices, as in \cite{GunnarssonNor2001}, need not be considered, even for the situation where $J$ is singular. The resulting criterion becomes
\begin{equation}\label{eq:lassocrit}
\begin{split}
\crit(f_{j+1}) = &
\frac{1}{2}
\|W_e (e_j - J f_{j+1})
\|_2^2
+ \lambda \|f_{j+1}\|_1,
\end{split}
\end{equation}
which closely reflects the original lasso approach in \cite{Tibshirani1996}.

\subsection{Elastic net lasso}\label{sec:elasticnet}
In the lasso ILC approach in \Sec~\ref{sec:lassoilc}, the commonly used weighting matrices in $W_f$ and $W_\df$ are set to zero. Interestingly, by selecting either $W_f$ or $W_{\df}$ unequal to zero, an ILC algorithm that relates to the elastic net is obtained, see \cite{ZouHas2005}, which combines lasso and ridge regression. An important advantage is that the elastic net improves group sparsity, where several components become zero simultaneously. Notice that a drawback of the so-called naive elastic net, which coincides with $W_f \neq 0$, $W_{\df} = 0$, is that it leads to a double shrinkage, and it benefits from a correction step \cite{ZouHas2005}. In contrast, in ILC the alternative choice $W_f = 0$, $W_{\df} \neq 0$ can be made, which enforces sparsity in addition to attenuating trial-varying disturbances, see \Sec~\ref{sec:generalapproach}.

\subsection{Sparse updates via fused lasso}\label{sec:fusedlasso}

In view of Requirement R\ref{eq:dfjspare}, it may be required that the signal $f_j$ is not necessarily sparse but piecewise constant, i.e., its value is only changed occasionally in time. This requires a certain structure of signal, which is different than sparsity $\|f_j\|_0$. The main idea is to select $D$ as 
\begin{equation}\label{eq:fusedlassoD}
D_{f} = 
\begin{bmatrix}
-1 & 1 & \\
& -1 & 1 \\
& & \ddots & \ddots\\
& & & -1 & 1
\end{bmatrix},
\end{equation}
a choice which is also known as the fused lasso, e.g., \cite{TibshiraniSauRosZhuKni2005} and leads to the criterion
\begin{equation}\label{eq:fusedlassocrit}
\begin{split}
\crit(f_{j+1}) = &
\frac{1}{2}
\|W_e (e_j - J f_{j+1})
\|_2^2
+ \lambda \|D_ff_{j+1} \|_1.
\end{split}
\end{equation}
Interestingly, the fused lasso \eqref{eq:fusedlassocrit} can be recast as a traditional lasso of the form \eqref{eq:lassocrit}, yet with an increment-input-output system description. To establish the connection, let $\tf[J^i] = \tf[J] (1-z^{-1})$ be the increment-input-output system. Also, expand $D_f$ in \eqref{eq:fusedlassoD} as 
\begin{equation}\label{eq:fusedlassoD2}
D_{f}^i = 
\begin{bmatrix}
1 \\
-1 & 1 & \\
& -1 & 1 \\
& & \ddots & \ddots\\
& & & -1 & 1
\end{bmatrix}.
\end{equation}
Then, a change of variables
\begin{equation}
f^i_{j+1} = D_f^i f_{j+1},
\end{equation}
where $f^i_{j+1} $ denotes the incremental input,
leads to
\begin{equation}\label{eq:fusedlassocrittransformed}
\begin{split}
\crit(f_{j+1}) = &
\frac{1}{2}
\|W_e (e_j - J^i f_{j+1}^i)
\|_2^2
+ \lambda \|f_{j+1}^i \|_1,
\end{split}
\end{equation}
with $J^i = J (D_f^i)^{-1}$ corresponding to $\tf[J^i]$.

\subsection{Sparse fused lasso}\label{sec:sparsefusedlasso}
Up to this point, Requirement R\ref{eq:fjspare} and Requirement R\ref{eq:dfjspare} have been addressed separately in \Sec~\ref{sec:lassoilc} and \Sec~\ref{sec:fusedlasso}, respectively. In certain applications, it may be desired to impose both Requirement R\ref{eq:fjspare} and Requirement R\ref{eq:dfjspare}. 

Interestingly, Requirement R\ref{eq:fjspare} and Requirement R\ref{eq:dfjspare} can be enforced both by selecting
\begin{align}\label{eq:sparsefusedlasso}
D = \begin{bmatrix}
\alpha D_f & I
\end{bmatrix},
\end{align}
in \eqref{eq:gencrit}. Here, the parameter $\lambda$ can still be chosen to enforce sparsity, i.e., Requirement R\ref{eq:fjspare}, whereas the additional tuning parameter $\alpha \in \mathbb{R}_{\geq 0}$ enforces Requirement R\ref{eq:dfjspare}. This leads to the so-called sparse fused lasso \cite{TibshiraniTay2011}. Note that additional requirements can easily be incorporated using a similar construction as \eqref{eq:sparsefusedlasso}.

\subsection{Basis function ILC}

In recent extensions to ILC, several basis functions are employed. On the one hand, wavelet basis functions are used in, e.g., \cite{MerryMolSte2008}. These immediately fit in the formulation \eqref{eq:gencrit}, see also \cite[\Sec\ 2.1.3]{TibshiraniTay2011}, enabling a systematic way for thresholding while explicitly addressing the performance criterion. 

On the other hand, flexibility to varying reference signals is achieved by employing basis functions that depend on the reference. In particular, the command signal is parameterized as $f_{j+1} = \Psi(r) \theta_{j+1}$, see, e.g.,  \cite{WijdevenBos2010}, \cite{MeulenTouBos2008}, \cite{BolderOomKoeSte2014c}, \cite{ZundertBolOom2015}. The proposed framework can be employed to minimize the number of required basis functions. For instance, a large set can be postulated, e.g., following the guidelines in \cite{LambrechtsBoeSte2005}. Next, an alternative formulation of \eqref{eq:gencrit} can be considered, e.g.,
\begin{equation}\label{eq:gencritbasis}
\begin{aligned}
& \min_{\theta_{j+1}}
& & \|\theta_{j+1}\|_1  \\
& \text{subject to}
& & \frac{1}{2}
\|W_e e_{j+1}
\|_2^2
+
\frac{1}{2}
\|W_f \Psi(r) \theta_{j+1}
\|_2^2
\\ & & &\quad +
\frac{1}{2}
\|W_{\df} \Psi(r) 
\left(
\theta_{j+1} - \theta_{j}
\right)
\|_2^2 \leq t,
\end{aligned}
\end{equation}
where a suitable value of $t$ can be obtained by solving the standard norm-optimal ILC in \eqref{eq:NOILC1}-\eqref{eq:NOILC2}.

\subsection{Extensions, analysis, and discussion} \label{sec:extensions}
A general framework for enforcing sparsity and structure in iterative learning control has been proposed, and several specific choices have been outlined. Further extensions that are beyond the scope of the present paper but can be directly incorporated include group lasso \cite{YuanLin2006}, adaptive lasso \cite{Zou2006}, reweighted $\ell_1$ \cite{CandesWakBoy2008}, and the use of non-convex penalties \cite{BertsimasKinMaz2016}. 

\subsubsection{Reestimation for debiasing}
Note that the lasso shrinks the estimate compared to the least-squares terms in \eqref{eq:gencrit}. Through a reestimation step of the nonzero coefficients, debiasing is obtained. Note that in certain cases, the bias helps to obtain a smaller overall error, i.e., including both bias and variance aspects, which closely relates to the well-known Stein estimator \cite{Stein1956}. However, for ILC such a bias is undesired, since it is automatically eliminated by performing iterations, see Theorem~\ref{thm:contractionmap}. Thus, it is expected that as the ILC iterations increase, the advantages of reestimating for debiasing become more important. Similar reestimation steps are proposed in \cite{RojasHja2011}, \cite{RojasTotHja2014}, \cite[Page 439]{Murphy2012}, \cite[\Sec \ 7.1]{KimKohBoyGor2009}. Interestingly, in the context of ILC, the idea of enforcing sparsity followed by a reestimation step essentially has the same role as a $Q$-filter in traditional ILC, see \cite{BoerenBarKokOom2016} for details. 

\subsubsection{Sparse signal recovery}
The main motivation for using the $\ell_1$ norm in \eqref{eq:gencrit} essentially is to provide a convex relaxation of the $\ell_0$ norm. In case the optimal command input, i.e., for $j \rightarrow \infty$ and $v_j = 0$, the signal $f_{\infty}$ that minimizes $J(f_\infty)$, is sparse, a relevant question is whether this optimal sparse vector can be recovered using the formulation \eqref{eq:gencrit}. The answer depends on the sparsity of the underlying optimal command input $f_j$, as well as on the matrix $J$. In \cite{CandesTao2005}, a sufficient condition that relies on the restriced isometry property is provided. However, these conditions are violated for many practical cases. Nonetheless, the formulation \eqref{eq:gencrit} provides an effective way to enforce sparsity. 

\subsubsection{Monotonic convergence}
Monotonic convergence is a commonly used requirement for practical applications. Indeed, it is well-known that poorly designed ILC algorithms can lead to a significant learning transient. It is well-known that traditional norm-optimal ILC, i.e., setting $\lambda = 0$ in \eqref{eq:gencrit}, is monotonically convergent in $f_j$, see, e.g., \cite{Bristow2008}, where the usual assumption $v_j = 0$ is tacitly assumed to analyze monotonic convergence. However, if $\lambda > 0$, the criterion \eqref{eq:gencrit} involves multiple norms, i.e., both the $\ell_1$ and the $\ell_2$ norm. As a result, monotonic convergence requires a more detailed analyis. 

To proceed, consider for instance the elastic net lasso of \Sec~\ref{sec:elasticnet} with $D = I$, $W_f = 0$, $W_\df \succ 0$. In this case, monotonic convergence of the ILC cannot be guaranteed in general if $\lambda = 0$. Interestingly, in this case the criterion \eqref{eq:gencrit} can be recast as
\begin{equation}\label{eq:gencritmonconv}
\begin{split}
\crit(f_{j+1}) = &
\frac{1}{2}
\left\|
\left(
\begin{bmatrix}
W_e e_j \\ 0
\end{bmatrix}
+
\begin{bmatrix}
W_e \\ W_\df
\end{bmatrix}f_j
\right)
-
\begin{bmatrix}
W_e \\ W_\df
\end{bmatrix}
f_{j+1}
\right\|_2^2
\\ &+ \lambda \| f_{j+1} \|_1.
\end{split}
\end{equation}
Next, there exists a value of $\tau$ such that the optimization problem
\begin{equation}\label{eq:critformonconv}
\begin{aligned}
& \min_{f_{j+1}}
& & \| f_{j+1} \|_1  \\
& \text{subject to}
& &\frac{1}{2}
\left\|
\begin{bmatrix}
W_e e_j \\ 0
\end{bmatrix}
-
\begin{bmatrix}
W_e \\ W_\df
\end{bmatrix}
(f_{j+1}-f_j)
\right\|_2^2 \leq \tau.
\end{aligned}
\end{equation}
has an identical solution as \eqref{eq:gencritmonconv} at a certain iteration $j$. If $\tau$ is fixed, then the criterion \eqref{eq:critformonconv} can be directly used to enforce monotonic convergence of $f_j$ in the $\ell_1$-norm.

\section{Application to a Wafer Stage}\label{sec:examples}

\subsection{Setup}

\begin{figure}%
\centering
\fbox{\includegraphics[width=.9\linewidth]{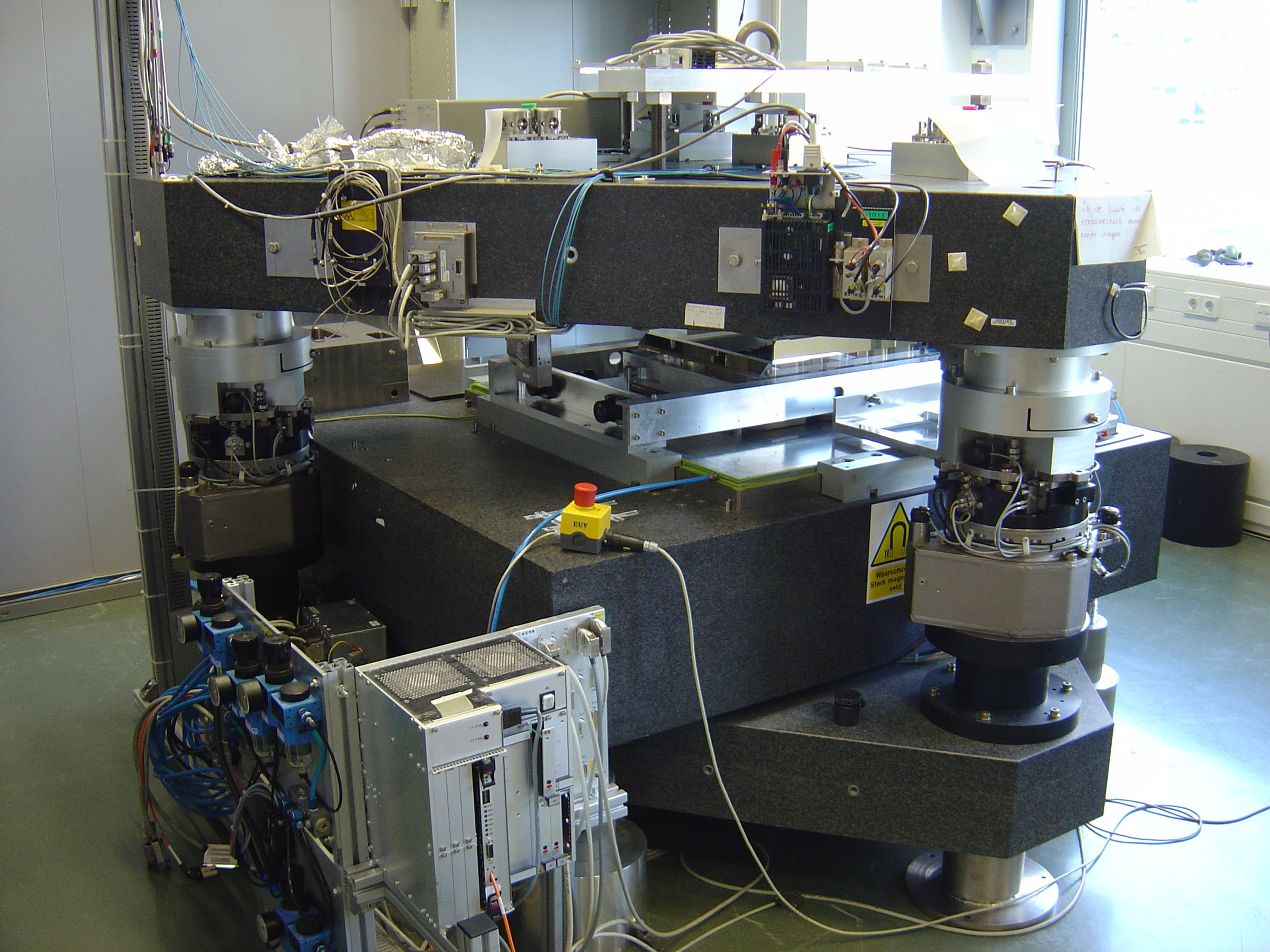}}%
\caption{Considered wafer stage application.}
\label{fig:systemdef}
\end{figure}

The considered system is a wafer stage, see  \Fig~\ref{fig:systemdef}. Wafer stages are positioning systems that are used in the production of integrated circuits (ICs) through a photolithographic process. The considered wafer stage is controlled in all six motion degrees-of-freedom, i.e., three translations and three rotations. The system is a dual-stage system, where the long stroke enables a stroke of $1 \ \mathrm{m}$ in the horizontal plane, whereas the short stroke enables a positioning accuracy of $1 \ \mathrm{nm}$. Further details on the system and the considered actuation and sensor system are provided in \cite{OomenHerQuiWalBosSte2014}. Throughout, a sampling frequency of $1 \ \mathrm{kHz}$ is adopted, as in \cite{Oomen2014}.

To enable a detailed comparison between the various approaches in \Sec~\ref{sec:spilc}, the identified model in \cite{Oomen2014} is considered as true system, i.e., the result as described in \cite{Oomen2014} is denoted $G_o$. In addition, the feedback controller designed in \cite{Oomen2014} is adopted to stabilize the system. In \Fig~\ref{fig:system}, the open-loop $G_o$ and closed-loop $S_o G_o$  are depicted. In addition, a closed-loop model is made, where a model error is introduced by selecting $J = 0.7 S_oG_o$. This model error is introduced to investigate robust convergence properties of ILC. The resulting model $J$ is also depicted in \Fig~\ref{fig:system}.

The additive noise $\tilde v_j$ is zero mean white noise with a normal distribution and variance $\lambda_e = 1.5 \cdot 10^{-7}$. As a result, $H$ in Assumption~\ref{assum:noise} has transfer function $\tilde H = \frac{1}{1+\tilde G \tilde C}$.

\begin{figure}%
\centering
\includegraphics[width=.9\linewidth]{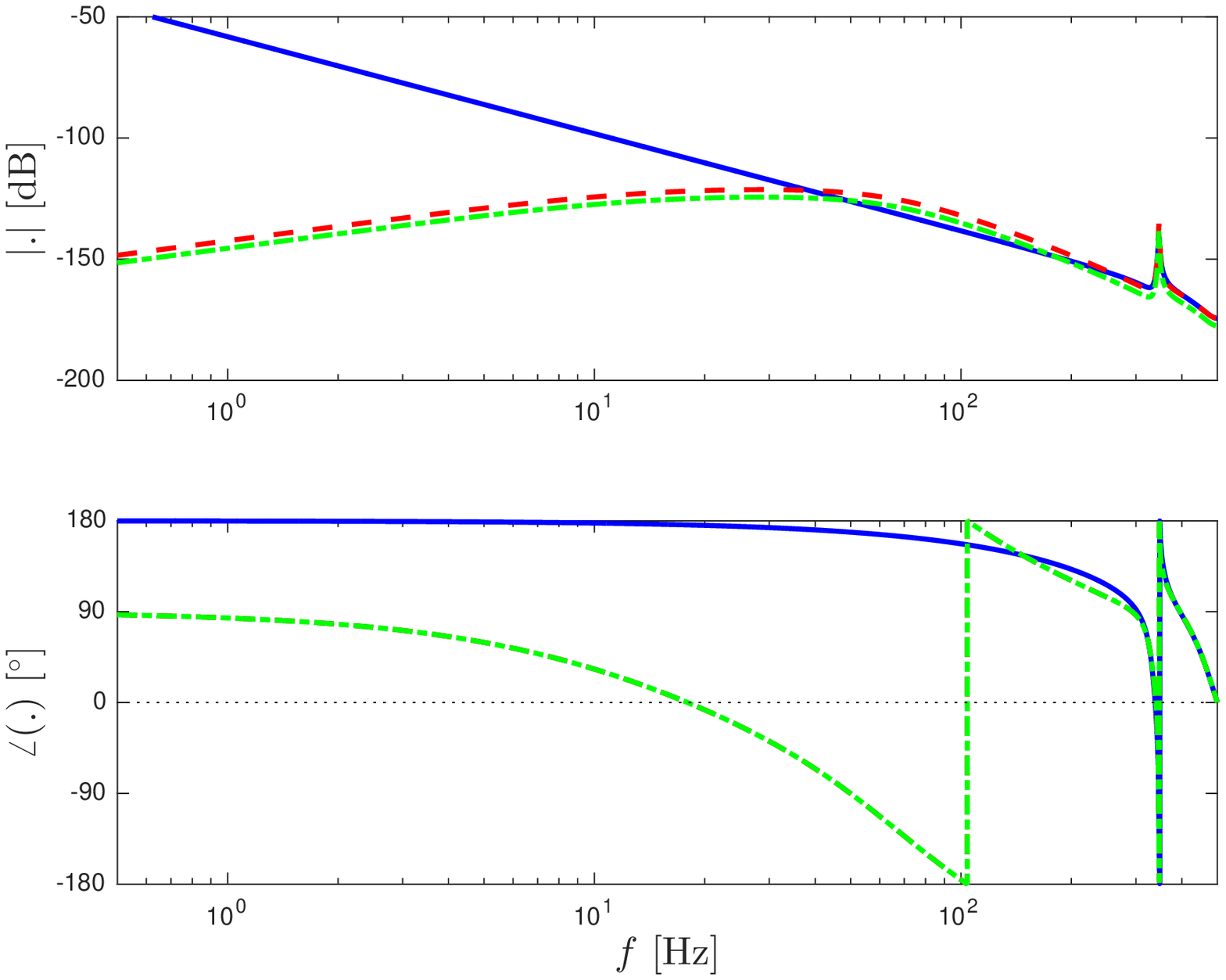}%
\caption{Open-loop true system $G_o$ in (\Colorone), closed-loop true system $S_oG_o$ (\Colortwo), closed-loop model $J$ (\Colorthree).}
\label{fig:system}
\end{figure}

\begin{figure}%
\centering
\includegraphics[width=.9\linewidth]{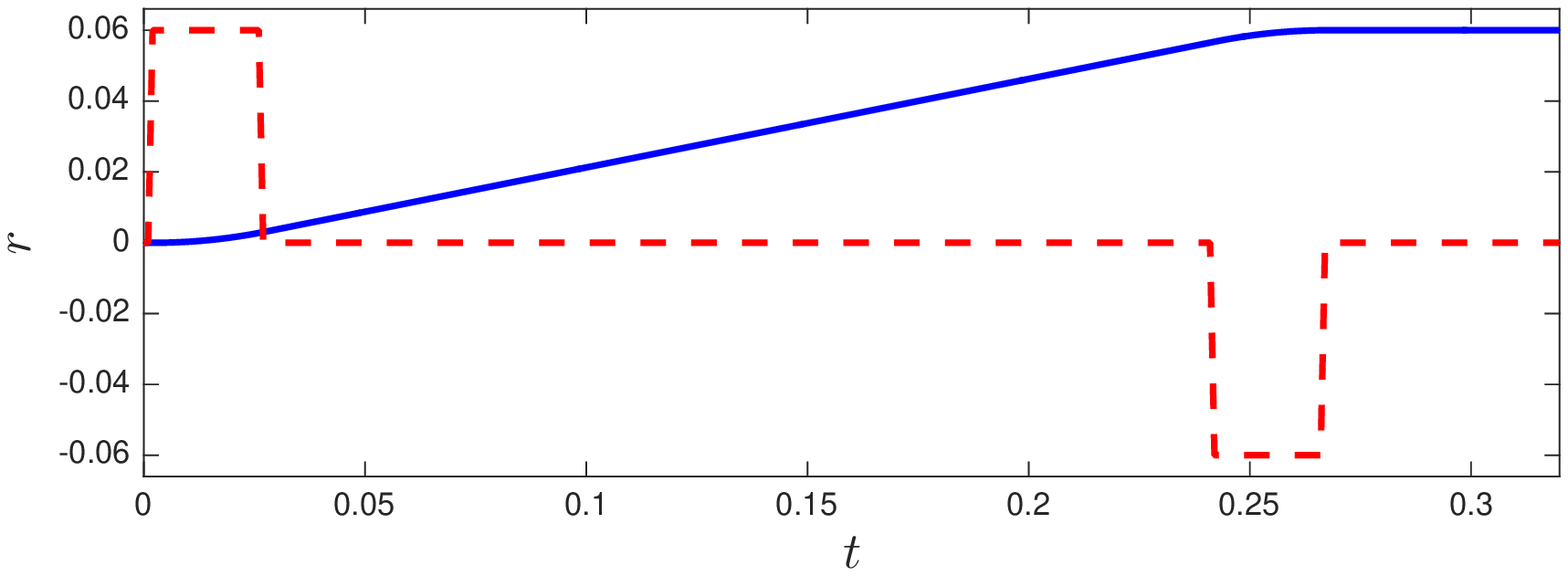}%
\caption{Reference $r$ in \eqref{eq:generalILCsystem} (\Colorone), scaled acceleration profile (\Colortwo).}
\label{fig:task}
\end{figure}

The task $r$ is shown in \Fig~\ref{fig:task}, which is a position signal. In addition, the corresponding scaled acceleration profile is depicted, which is expected to constitute the main contribution of $f_j$ \cite{LambrechtsBoeSte2005}, \cite{MeulenTouBos2008}. For the considered wafer stage application in \Fig~\ref{fig:systemdef}, the constant velocity phase is most important for performance, see \cite[\Fig\ 16 and \Fig\ 20]{Butler2011}, which takes place between $0.03\ \mathrm{s}$ and $0.24 \ \mathrm{s}$.

The goal of this section is to illustrate and compare the proposed approaches in Sec~\ref{sec:spilc}. The reference situation, i.e., feedback only with $f_0 = 0$ in \Fig~\ref{fig:parallelILC} is shown in \Fig~\ref{fig:ilctikresults} (\Colorone), \Fig~\ref{fig:ilctikresults2}, and \Fig~\ref{fig:spectra}. In particular, the approaches in \Sec~\ref{sec:spilc} are applied in this section. %

\subsection{Traditional Norm-Optimal ILC}\label{sec:exampleNOILC}

First, the traditional norm-optimal ILC solution is implemented with $\lambda = 0$ in \eqref{eq:gencrit} with the analytic solution \eqref{eq:NOILC1}-\eqref{eq:NOILC2}. Here, $W_e = I$, $W_f = 0$, and $W_{\df} = 10^{-10}$. Notice that $W_{\df}$ is relatively small but nonzero, since a nonzero $W_{\df}$ or $W_f$ is required to enforce a unique optimal solution. 

The results after $40$ iterations are depicted in \Fig~\ref{fig:ilctikresults}. Clearly, the error is reduced to a very small value. As is expected, the feedforward is nonzero at every time instant and very noisy. 

To further analyze these results, the $2$-norm of the stochastic, i.e., trial-varying, part of the error is computed as $\sqrt{\sum_{t = 1}^N (e_j(t) - \hat e_\infty(t))^2}$, see \Fig~\ref{fig:ilctikresults2}. Here, $\hat e_\infty $ is computed as follows. After a sufficient number of iterations $n_{\text{conv}}$, the ILC algorithm is assumed to have converged, after which an additional iterations $n_{\text{iter}}$ is used to compute
\begin{math}
\hat e_\infty = \frac{1}{n_{\text{iter}}} \sum_{j = n_{\text{conv}}}^{n_{\text{conv}}+n_{\text{iter}}-1}e_j.
\end{math}
Clearly, \Fig~\ref{fig:ilctikresults2} reveals that the trial-varying part of the error is amplified by a factor $2$, which corroborates the result of Theorem~\ref{thm:noiseanalysis}, where $Q \approx 1$ due to the specific selection of weighting filters. 

To further investigate the amplification of trial-varying disturbances, the spectrum of the trial-varying part of the errors in \Fig~\ref{fig:ilctikresults2}  is estimated, see \Fig~\ref{fig:spectra}. In addition, the spectrum $\phi_v = \left| \frac{1}{1+\tilde G \tilde C}\right|^2 \lambda_e$ is computed, as well as $2\phi_v$. Again, this clearly confirms the result of Theorem~\ref{thm:noiseanalysis}. In particular, the presented ILC approach with $\lambda = 0$ and small $W_f$ and $W_\df$ leads to a perfect attenuation of trial-invariant disturbances. However, it amplifies trial-varying disturbances by a factor two, and leads to an $f_j$ with large $\|f_j\|_0$, violating Requirement R\ref{eq:fjspare}, as well as R\ref{eq:dfjspare}.

Summarizing, the results in  \Fig~\ref{fig:ilctikresults},  \Fig~\ref{fig:ilctikresults2}, and  \Fig~\ref{fig:spectra} confirm that norm-optimal ILC amplifies trial-varying disturbances, and leads to a non-sparse solution in view of Requirement R\ref{eq:fjspare} and Requirement R\ref{eq:dfjspare}.

\begin{figure}%
\centering
\includegraphics[width=.9\linewidth]{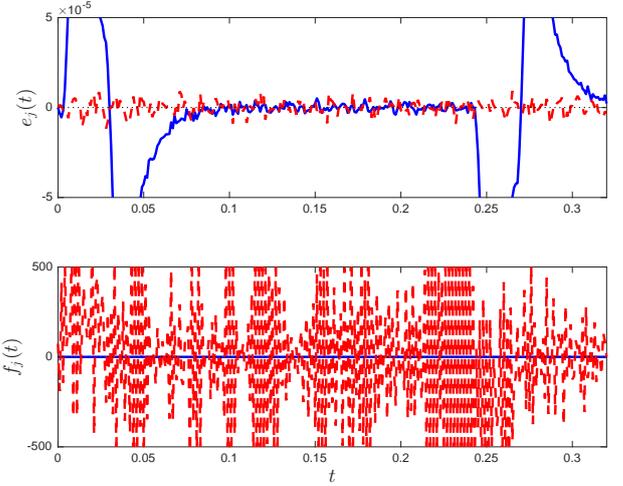}%
\caption{Top: error $e_{j}$. Bottom: command signal $f_{j}$. Shown are iteration $j=0$ (\Colorone), iteration $j=40$  for traditional norm-optimal ILC of \Sec~\ref{sec:exampleNOILC} (\Colortwo).}
\label{fig:ilctikresults}
\end{figure}

\begin{figure}%
\centering
\includegraphics[width=.9\linewidth]{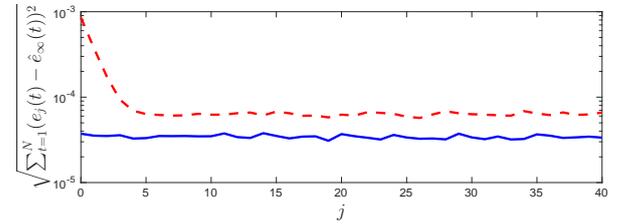}%
\caption{Iteration $j=0$ (\Colorone), iteration $j=40$  for traditional norm-optimal ILC of \Sec~\ref{sec:exampleNOILC} (\Colortwo).}
\label{fig:ilctikresults2}
\end{figure}

\begin{figure}%
\centering
\includegraphics[width=.9\linewidth]{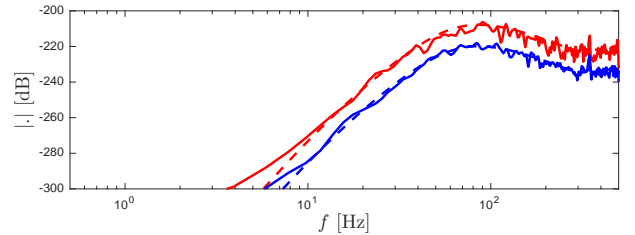}%
\caption{Estimated spectrum of trial-varying part of the error without ILC (solid blue) and for traditional norm-optimal ILC of \Sec~\ref{sec:exampleNOILC} (solid red). Also shown are the spectra $2\phi_v$ (dashed blue) and $\phi_v$ (dashed red).}
\label{fig:spectra}
\end{figure}
 
\subsection{Lasso ILC}\label{sec:exampleLassoILC}

To address Requirement R\ref{eq:fjspare}, the approach in \Sec~\ref{sec:lassoilc} is applied. In particular, $W_e = I$, $W_f = 0$, and $W_{\df} = 0$, $D = I$, and $\lambda =  5 \cdot 10^{-9}$. Next, the ILC iteration is started, and after $40$ iterations it leads to $e_{40}$ and $f_{40}$ in \Fig~\ref{fig:ilclasresults}. Interestingly, $\|f_{40}\|_0$ is much smaller for the lasso ILC approach compared to the results of \Sec~\ref{sec:exampleNOILC}, as is confirmed in \Fig~\ref{fig:ilclasresults3}, thereby addressing Requirement R\ref{eq:fjspare}.  

Also, the $2$-norm of the error signal is computed, see \Fig~\ref{fig:ilclasresults2}. Clearly, the error reduces significantly over the iterations. Finally, also the re-estimated lasso, as is explained in \Sec~\ref{sec:extensions}, is implemented. The results are also depicted in \Fig~\ref{fig:ilclasresults2}. Interestingly, it can be observed that re-estimating leads to a smaller limit error, as is expected. However, note that during the iterations, the approach of \Sec~\ref{sec:lassoilc} leads to a smaller error compared to the re-estimated version in several of the initial iterations. An explanation for this aspect is that the biased estimate leads to a smaller overall error, which is a similar effect as in the Stein estimator. Hence, it is concluded that for non-iterative approaches, the biased estimate can be useful in terms of a bias/variance trade-off, but in the iterative schemes the benefit of re-estimation is clearly confirmed in \Fig~\ref{fig:ilclasresults}.

\begin{figure}%
\centering
\includegraphics[width=.9\linewidth]{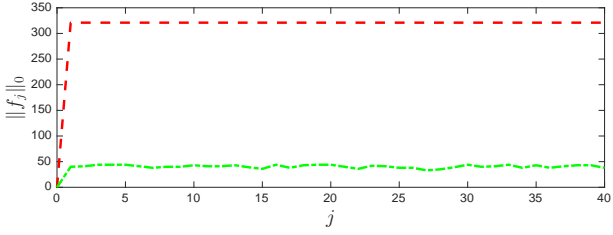}%
\caption{$\ell_0$-norm of the error for norm-optimal ILC of \Sec~\ref{sec:exampleNOILC} (\Colortwo) and lasso ILC of \Sec~\ref{sec:exampleLassoILC} (\Colorthree), which leads to a reduced error signal.}
\label{fig:ilclasresults3}
\end{figure}

\begin{figure}%
\centering
\includegraphics[width=.9\linewidth]{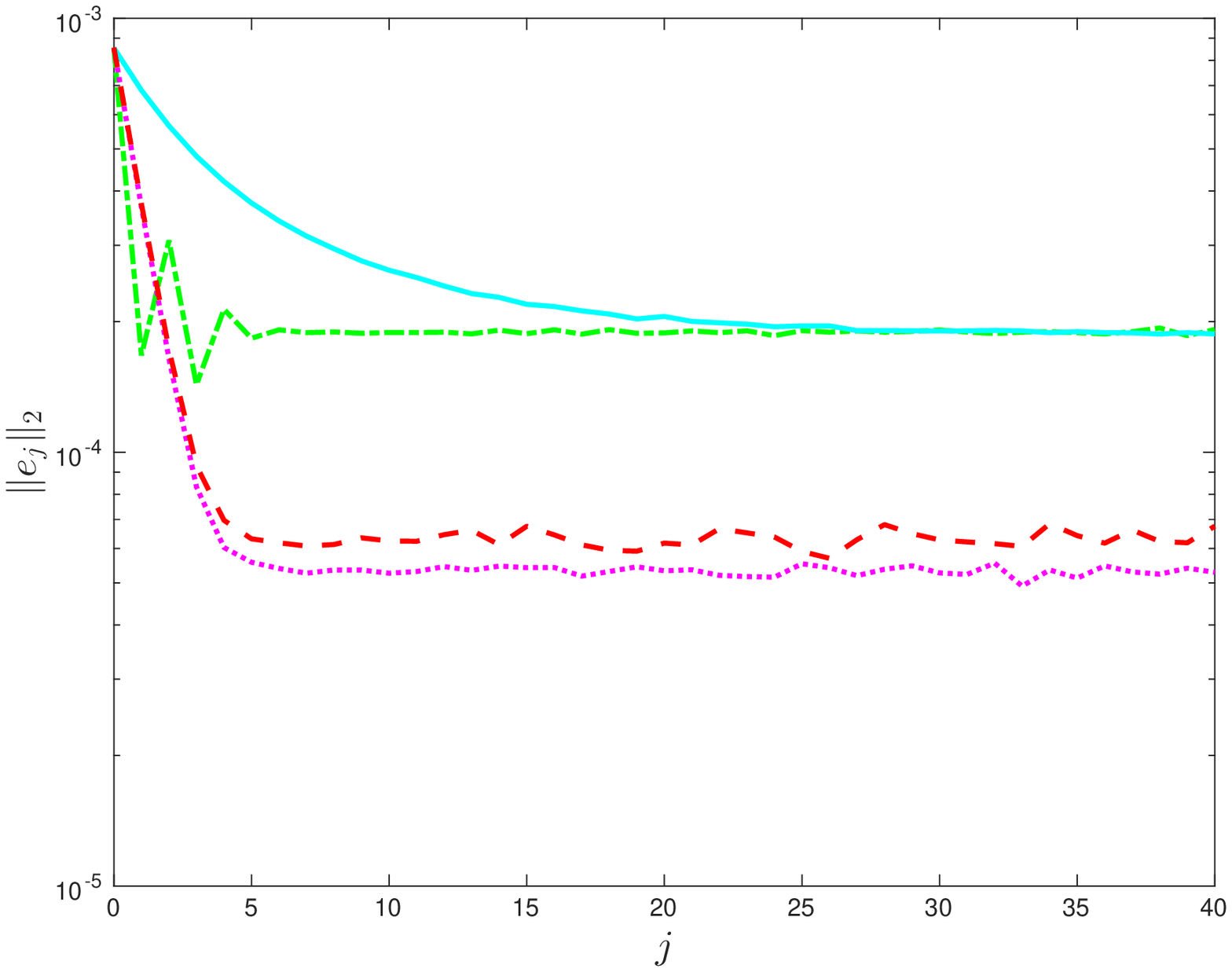}%
\caption{Computed 2-norm of the error for various ILC algorithms. Traditional norm-optimal ILC in \Sec~\ref{sec:exampleNOILC} (\Colortwo) leads to a significant error reduction in the initial iterations, and then remains at a certain level due to the amplification of trial-varying disturbances. The lasso approach of \Sec~\ref{sec:exampleLassoILC} with re-estimation (\Colorfour) leads to a significant reduction in the initial iterations, in addition to a reduced limit error, since it reduces amplification of trial-varying disturbances. Also note that the lasso approach without re-estimation (\Colorthree) leads to an improved estimate in the first iteration, yet remains at a large error after convergence, which is due to the bias error in the solution. Finally, the elastic-net lasso approach of \Sec~\ref{sec:exampleENLassoILC} is shown (\Colorfive), which leads to a comparable converged performance as the lasso ILC (\Colorthree), since both do not include re-estimation in this case.}
\label{fig:ilclasresults2}
\end{figure}

\begin{figure}%
\centering
\includegraphics[width=.9\linewidth]{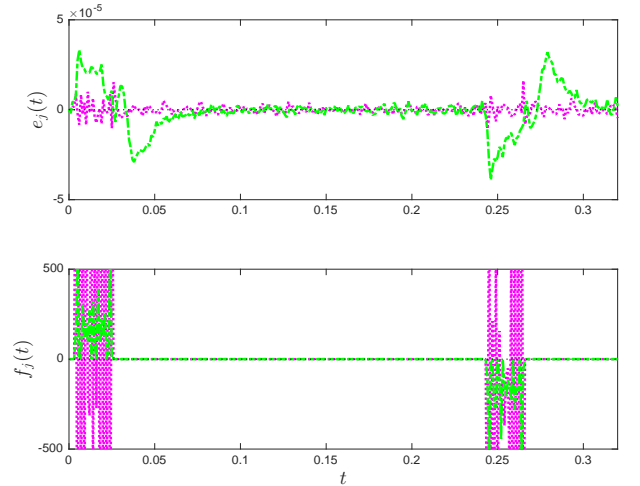}%
\caption{Top: error $e_{40}$ at iteration $j=40$. Bottom: command signal $f_{40}$ at iteration $j=40$. Shown are lasso ILC of \Sec~\ref{sec:exampleLassoILC} (\Colorthree) and re-estimated lasso ILC  of \Sec~\ref{sec:exampleLassoILC} (\Colorfour).}
\label{fig:ilclasresults}
\end{figure}

\subsection{Elastic net lasso ILC}\label{sec:exampleENLassoILC}

In this section, the approach of \Sec~\ref{sec:elasticnet} is pursued, where the lasso regularisation is extended with a ridge regression term. In particular, $W_f = 0$, while $W_{\df} = 1\cdot 10^{-6} I$. The resulting error $e_{40}$ and command input $f_{40}$ are depicted in \Fig~\ref{fig:ilclasENresults}. The error is of comparable magnitude as the lasso ILC in \Sec~\ref{fig:ilclasresults}, while the command input is substantially smoother. The error in fact has slightly reduced compared to lasso ILC, as is shown in \Fig~\ref{fig:ilclasresults2}, which comes at the price of a slower convergence rate due to an increased $W_{\df}$. Notice that the elastic net lasso can also be improved by re-estimation, which is not done here to facilitate the presentation.

\begin{figure}%
\centering
\includegraphics[width=.9\linewidth]{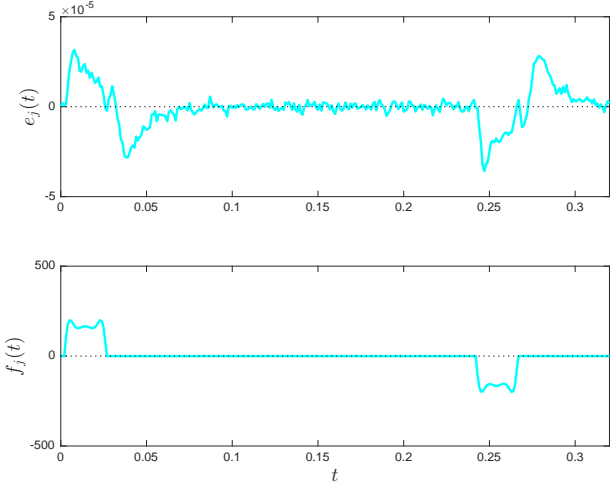}%
\caption{Top: error $e_{40}$ at iteration $j=40$. Bottom: command signal $f_{40}$ at iteration $j=40$. Shown is the elastic net lasso ILC of \Sec~\ref{sec:exampleENLassoILC} (\Colorfive), leading to a smooth command input $f_j$.}
\label{fig:ilclasENresults}
\end{figure}

\subsection{Fused lasso ILC}\label{sec:examplefusedLassoILC}

The results in the previous sections have addressed Requirement R\ref{eq:fjspare}. In certain situations, e.g., wireless sensors or embedded implementations, it may be required to minimize the number of times the command input is updated, i.e., Requirement R\ref{eq:dfjspare}. This is a different form of structure compared to sparsity. To address this, the fused lasso of \Sec~\ref{sec:fusedlasso} is employed. 

In particular, in the general criterion \eqref{eq:gencrit} is considered, where the weighting filters are selected as $W_f = W_\df = 0$, $D = D_f$ in \eqref{eq:fusedlassoD}, and $\lambda = 3 \cdot 10^{-12}$. 

Next, the ILC iteration is invoked. The results are shown in \Fig~\ref{fig:ilclasfusedresults}. Compared to the results of \Fig~\ref{fig:ilclasresults} in \Sec~\ref{sec:exampleLassoILC}, the error has reduced significantly. However, this comes at the price of sparsity. Indeed, only the first samples are zero, since the algorithm is initialized with $f_1(0) = 0$. Interestingly, only a limited number of command signal updates are required to achieve a small error signal. This will also attenuate the effect of trial-varying disturbances. Note that the error can be further reduced by including a re-estimation step, which is not shown here to facilitate the presentation.

\begin{figure}%
\centering
\includegraphics[width=.9\linewidth]{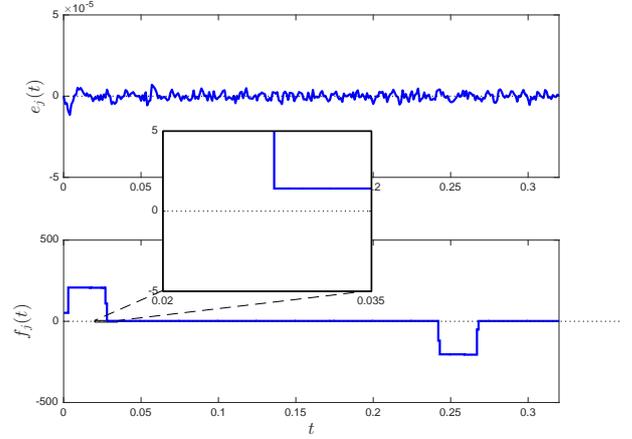}%
\caption{Top: error $e_{40}$ at iteration $j=40$. Bottom: command signal $f_{40}$ at iteration $j=40$. Shown is the fused lasso ILC of \Sec~\ref{sec:examplefusedLassoILC} (\Colorfive), leading to a command input $f_j$ that addresses Requirement R\ref{eq:dfjspare}. In particular, the command signal $f_{40}$ aims to minimize the error signal in addition to the updates, i.e., instants where $f_{40}$ changes as a function of time. This does not explicitly address sparsity of $f_{40}$ itself, as can be clearly observed from the zoom plot.}
\label{fig:ilclasfusedresults}
\end{figure}

\subsection{Sparse fused lasso ILC}\label{sec:examplesparsefusedLassoILC}
In the previous sections, Requirement R\ref{eq:fjspare}  and Requirement R\ref{eq:dfjspare} are achieved separately in \Sec~\ref{sec:exampleLassoILC} and \Sec~\ref{sec:examplefusedLassoILC}, respectively. To address both requirements simultaneously, the sparse fused lasso approach of \Sec~\ref{sec:sparsefusedlasso} is adopted. The regularization penalties in \eqref{eq:sparsefusedlasso} are selected such that these essentially combine the two penalties in \Sec~\ref{sec:exampleLassoILC} and \Sec~\ref{sec:examplefusedLassoILC}.

The results are depicted in \Fig~\ref{fig:ilclassparsefusedresults}. It can directly be observed that it combines the sparsity of \Sec~\ref{sec:exampleLassoILC} while at the same time reducing the number of command signal updates as in \Sec~\ref{sec:examplefusedLassoILC}. As such, it is concluded that the sparse fused lasso addresses Requirement R\ref{eq:fjspare}  and Requirement R\ref{eq:dfjspare} simultaneously. The relative penalties can be further tuned to balance the importance of both penalties, as well as the resulting error signal. In addition, the resulting error signal can be further enhanced through a re-estimation step.

\begin{figure}%
\centering
\includegraphics[width=.9\linewidth]{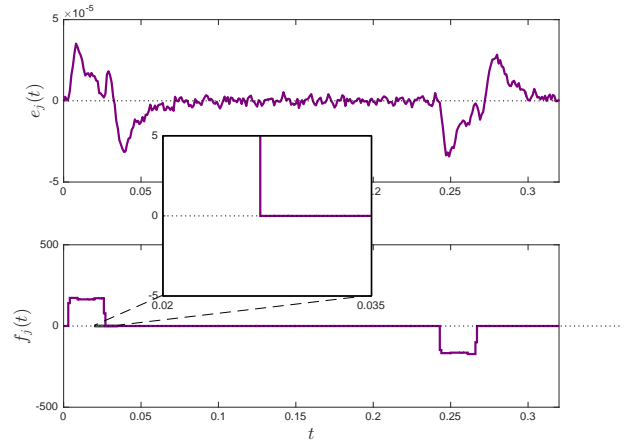}%
\caption{Top: error $e_{40}$ at iteration $j=40$. Bottom: command signal $f_{40}$ at iteration $j=40$. Shown is the sparsefused  lasso ILC of \Sec~\ref{sec:examplesparsefusedLassoILC} (\Colorfive), leading to a command input $f_j$ that addresses Requirement R\ref{eq:dfjspare}. In comparison to the fused lasso approach in Sec~\ref{sec:examplefusedLassoILC}, here additional regularization parameters enforce a zero input signal, which can clearly be seen in the zoom plot.}
\label{fig:ilclassparsefusedresults}
\end{figure}

\section{Conclusion}

A general framework is presented that extends optimization-based iterative learning control to include additional structure, including sparsity. The approach is shown on a mechatronic system, where it is shown to have significant benefits, including 
\begin{inparaenum}[i)]
\item resource-efficiency in terms of sparse command signals, e.g., facilitating embedded controller implementations;
\item resource-efficiency in terms of limiting the number of changes in the command signal, e.g., facilitating implementation in limited-capacity communication networks;
\item automated basis function selection in flexible iterative learning control employing basis functions; and
\item attenuation of trial-varying disturbances, which for the considered wafer scanner example leads to significant performance increase.
\end{inparaenum}
Regarding the latter, a detailed analysis of trial-varying disturbances in ILC reveals that such trial-varying exogenous signals are often amplified by typical ILC algorithms. The proposed framework enables a significant reduction of this amplification, typically up to a factor of two. 

The proposed framework enables many user-specific choices, and can be easily extended. For instance, for third-order or higher-order setpoints, it may be useful to impose regularization parameters of equally high polynomial orders, known as polynomial trend filtering \cite[\Sec\ 2.1.2]{TibshiraniTay2011}, which is a special case of the general criterion \eqref{eq:gencrit}. 

Ongoing research focusses on specialized algorithms for the considered scenarios, enabling faster computation. In addition, the correlation between variables is subject of further investigation. Finally, various aspects of monotonic convergence, which has here been analyzed in terms of the $\ell_1$-norm, are being investigated, including robust monotonic convergence conditions \cite{WijdevenDonBos2009}, \cite{DuysonPipSwe2016} and data-driven ILC frameworks \cite{JanssensPipSwe2013}, \cite{AAABolderKleOom}.%

\section*{Appendix}
In this section, a proof of Theorem~\ref{thm:noiseanalysis} is provided. Several auxiliary results are presented. In particular, note that at iteration $j$, the error is a function of all previous signals affecting the loop due to the iteration-domain integrator in \eqref{eq:freqilcupdate}. In the following lemma, the summation of $j$ terms of the trial-invariant disturbance $r$ in \eqref{eq:generalILCsystem} is eliminated.

\begin{lemma}\label{lemma:firststepproof}
Consider the system \eqref{eq:generalILCsystem} and ILC update \eqref{eq:freqilcupdate} with $f_0 = 0$ and assume that iteration is stable in the sense of Theorem~\ref{thm:contractionmap}. Then,
\begin{align}\label{eq:eliminater}
e_j =& \left(
1 - \tf[J] \frac{1-(\tf[Q](1-\tf[L]\tf[J]))^{j}}{1-\tf[Q](1-\tf[L]\tf[J])}\tf[Q]\tf[L]\right)r\\
&-
v_j
-
\tf[J] \sum_{n=0}^{j-1}(\tf[Q](1-\tf[L]\tf[J])^n\tf[Q]\tf[L]v_{j-n-1}.
\end{align}
\end{lemma}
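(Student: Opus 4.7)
The plan is to unroll the iteration-domain recursion for $f_j$ in closed form and substitute into the error equation. Substituting $e_j = r - \tilde J f_j - v_j$ into the ILC update $f_{j+1} = \tilde Q(f_j + \tilde L e_j)$ yields a first-order linear recursion in the trial index,
\begin{equation}
f_{j+1} = \tilde Q(1 - \tilde L \tilde J)\, f_j + \tilde Q \tilde L\, r - \tilde Q \tilde L\, v_j,
\end{equation}
i.e.\ $f_{j+1} = A f_j + b_j$ with $A := \tilde Q(1 - \tilde L \tilde J)$ and forcing $b_j = \tilde Q \tilde L (r - v_j)$. The goal is then to extract a closed form for $f_j$, plug into $e_j = r - \tilde J f_j - v_j$, and collect terms by trial-invariant versus trial-varying content.

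Starting from $f_0 = 0$, a routine induction in $j$ gives $f_j = \sum_{n=0}^{j-1} A^n b_{j-1-n}$. Because $r$ is trial-invariant, the piece of $f_j$ it produces is $\bigl(\sum_{n=0}^{j-1} A^n\bigr)\tilde Q \tilde L\, r$, and the finite geometric series collapses to $(1 - A^j)/(1 - A)$. Invertibility of $1 - A$ is ensured by the contraction hypothesis of Theorem~\ref{thm:contractionmap}, which places $\tilde A$ strictly inside the unit ball of $\mathcal{L}_\infty$ and in particular excludes unit-modulus eigenvalues. The piece driven by $v_{\cdot}$ cannot be collapsed, since the disturbance is trial-varying, and remains as the convolution $\sum_{n=0}^{j-1} A^n \tilde Q \tilde L\, v_{j-1-n}$. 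Substituting this decomposition into $e_j = r - \tilde J f_j - v_j$ and relabeling $v_{j-1-n} \leftrightarrow v_{j-n-1}$ produces the claimed expression: the coefficient of $r$ is $1 - \tilde J \tfrac{1 - A^j}{1 - A} \tilde Q \tilde L$, the direct disturbance contributes $-v_j$, and the convolution of past disturbances contributes the remaining sum.

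The computation is elementary, so the main obstacle is purely bookkeeping rather than mathematics: keeping the summation indices aligned across the unrolling step, retaining the finite-horizon truncation $(1 - A^j)$ rather than prematurely passing to the geometric limit (which would only be correct as $j \to \infty$), and verifying that the scalar geometric identity lifts to the operator/transfer-function setting. The latter is justified because, under the contraction condition, all relevant quantities live in $\mathcal{L}_\infty$ and the identity $(1-A)\sum_{n=0}^{j-1} A^n = 1 - A^j$ is an algebraic statement that holds pointwise in the frequency variable.
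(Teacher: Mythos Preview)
Your proposal is correct and follows essentially the same route as the paper: substitute \eqref{eq:generalILCsystem} into \eqref{eq:freqilcupdate} to obtain the linear recursion $f_{j+1}=\tilde Q(1-\tilde L\tilde J)f_j+\tilde Q\tilde L(r-v_j)$, unroll it from $f_0=0$ to the finite convolution $f_j=\sum_{n=0}^{j-1}A^n b_{j-1-n}$, collapse the trial-invariant $r$-part via the finite geometric series, and substitute back into $e_j=r-\tilde J f_j-v_j$. Your added remark that the contraction hypothesis guarantees invertibility of $1-\tilde Q(1-\tilde L\tilde J)$ is a welcome point of rigor that the paper leaves implicit.
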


\begin{proof}
Substituting \eqref{eq:generalILCsystem} into \eqref{eq:freqilcupdate} yields
\begin{equation}
f_{j+1} = \tf[Q]((1-\tf[L]\tf[J])f_j + \tf[L]r - \tf[L] v_j).
\end{equation}
Given $f_0 = 0$ and subsequent successive substitution yields
\begin{math}
f_1 = \tf[Q](\tf[L]r - \tf[L]v_0)
\end{math}, 
\begin{math}
f_2 = \tf[Q]((1-\tf[L]\tf[J])+1)\tf[Q]\tf[L]r - \tf[Q](1-\tf[L]\tf[J])\tf[Q]\tf[L]v_0 -\tf[Q]\tf[L]v_1,
\end{math}
and hence
\begin{equation}
f_j = \sum_{i=0}^{j-1}(\tf[Q](1-\tf[L]\tf[J]))^{i} \tf[Q]\tf[L]r - \sum_{n=0}^{j-1}(\tf[Q](1-\tf[L]\tf[J]))^{n}\tf[Q]\tf[L]v_{j-1-n}.
\end{equation}
Next, using the geometric series 
\begin{equation}\label{eq:geometricseries}
\sum_{l = 0}^{j-1} r^l = \frac{1 - r^j}{1-r},
\end{equation}
this leads to
\begin{equation}\label{eq:almostfinalstep}
f_j = \frac{1-(\tf[Q](1-\tf[L]\tf[J]))^{j}}{1-\tf[Q](1-\tf[L]\tf[J])}\tf[Q]\tf[L]r - \sum_{n=0}^{j-1}(\tf[Q](1-\tf[L]\tf[J]))^{n}\tf[Q]\tf[L]v_{j-1-n}.
\end{equation}
Finally, substitution of \eqref{eq:almostfinalstep} into \eqref{eq:generalILCsystem} yields the desired result \eqref{eq:eliminater}.
\end{proof}

The result~\eqref{eq:eliminater} reveals that the error contains a summation over $j$ trial-varying disturbance terms $v_j$, whereas the influence of the trial-invariant disturbances is captured in a single term through the use of a geometric series. Although the trial-varying disturbance varies on each experiment, a closed-form expression can be obtained by exploiting Assumption~\ref{assum:noise}. 

\begin{lemma}\label{lemma:secondstepproof}
Let Assumption~\ref{assum:noise} hold. Then, under the assumptions of Lemma~\ref{lemma:firststepproof}, 
\begin{align}\label{eq:eliminatev}
\phi_{e_j} =& \left| 
1 - \tf[J] \frac{1-(\tf[Q](1-\tf[L]\tf[J]))^{j}}{1-\tf[Q](1-\tf[L]\tf[J])}\tf[Q]\tf[L]
\right|^2 \phi_r \\ 
&+
\left(
1
+
\left|
\tf[J]\tf[Q]\tf[L]
\right|^2
\frac{1-|\tf[Q](1-\tf[L]\tf[J])|^{2j}}{1-|\tf[Q](1-\tf[L]\tf[J])|^2}
\right)\phi_v
\end{align}
\end{lemma}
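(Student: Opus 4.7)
The plan is to start from the closed-form expression for $e_j$ provided by Lemma~\ref{lemma:firststepproof} and compute its power spectrum term-by-term, exploiting the independence structure that Assumption~\ref{assum:noise} supplies. The expression decomposes $e_j$ into three additive contributions: a deterministic/trial-invariant part driven by $r$, the current trial-varying disturbance $-v_j$, and a weighted sum of past trial-varying disturbances $v_{j-n-1}$ for $n = 0, \dots, j-1$ filtered through the iteration-domain update. The key observation is that, since $n_j$ is i.i.d.\ across trials by Assumption~\ref{assum:noise} and $v_k = H n_k$, the disturbances $v_k$ and $v_\ell$ are mutually uncorrelated whenever $k \neq \ell$, and $r$ is uncorrelated with every $v_k$.

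First, I would invoke the standard frequency-domain rule that for an LTI filter $\tilde{G}$ acting on a stationary signal $x$, one has $\phi_{Gx} = |\tilde{G}|^2 \phi_x$, which covers the $r$-term directly and yields the first line of \eqref{eq:eliminatev}. Second, since all the disturbance contributions are pairwise uncorrelated, the cross-spectra vanish and the spectrum of the disturbance part becomes the sum of the individual spectra. The $-v_j$ term contributes simply $\phi_v$, while each summand $-\tf[J](\tf[Q](1-\tf[L]\tf[J]))^n \tf[Q]\tf[L] v_{j-n-1}$ contributes $|\tf[J]\tf[Q]\tf[L]|^2 \, |\tf[Q](1-\tf[L]\tf[J])|^{2n} \phi_v$. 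Third, summing these geometric contributions over $n = 0, \ldots, j-1$ and applying the finite geometric series identity \eqref{eq:geometricseries} with ratio $|\tf[Q](1-\tf[L]\tf[J])|^2$ produces exactly the bracketed factor $\frac{1-|\tf[Q](1-\tf[L]\tf[J])|^{2j}}{1-|\tf[Q](1-\tf[L]\tf[J])|^{2}}$ multiplying $|\tf[J]\tf[Q]\tf[L]|^2 \phi_v$. Adding the $1 \cdot \phi_v$ term from the current $-v_j$ yields the second line of \eqref{eq:eliminatev}.

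The main subtlety, and the only step that needs care, is justifying that the cross-spectra in the noise summation vanish: this relies on $v_k$ for different $k$ being jointly stationary and mutually uncorrelated, which follows because each $v_k = H n_k$ with $\{n_k\}$ i.i.d.\ across trials, and $H$ is a fixed LTI filter acting within a single trial. One should also briefly note that Theorem~\ref{thm:contractionmap} guarantees $|\tf[Q](1-\tf[L]\tf[J])| < 1$ pointwise on the unit circle, so the geometric series is well-defined for every $j$ (and in fact admits the limit used in Theorem~\ref{thm:noiseanalysis} as $j \to \infty$). Everything else is an essentially mechanical combination of the $|\tilde{G}|^2 \phi_x$ rule and the geometric-series identity already exploited in the proof of Lemma~\ref{lemma:firststepproof}.
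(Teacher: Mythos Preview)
Your proposal is correct and follows essentially the same approach as the paper: take spectra of the expression from Lemma~\ref{lemma:firststepproof}, use the mutual uncorrelatedness of the $v_k$ across trials (from Assumption~\ref{assum:noise}) so that the spectra add, and then collapse the resulting sum $\sum_{n=0}^{j-1}|\tf[Q](1-\tf[L]\tf[J])|^{2n}$ via the finite geometric series \eqref{eq:geometricseries}. The paper's own proof is extremely terse (``Taking spectra yields \ldots\ Next, using \eqref{eq:geometricseries} \ldots''), whereas you spell out the independence argument and the $|\tilde G|^2\phi_x$ rule explicitly; the substance is the same.
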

\begin{proof}
Taking spectra yields
\begin{align}
\phi_{e_j} =& \left| 
1 - \tf[J] \frac{1-(\tf[Q](1-\tf[L]\tf[J]))^{j}}{1-\tf[Q](1-\tf[L]\tf[J])}\tf[Q]\tf[L]
\right|^2 \phi_r \\ 
&+
\left(
1
+
\left|
\tf[J]\tf[Q]\tf[L]
\right|^2
\sum_{n=0}^{j-1}
\left|
(\tf[Q](1-\tf[L]\tf[J]))^n
\right|^2
\right)\phi_v.
\end{align}
Next, using \eqref{eq:geometricseries} yields the desired result \eqref{eq:eliminatev}. 
\end{proof}

The closed-form solution~\eqref{eq:eliminatev} enables a direct proof of Theorem~\ref{thm:noiseanalysis}.

\begin{proof}(of Theorem~\ref{thm:noiseanalysis})
Taking the limit $j \rightarrow \infty$ implies that $|(1-\tf[L]\tf[J]))^{j}| \rightarrow 0$, directly leading to the desired result \eqref{eq:limiterrorspectrum}.
\end{proof}

\section*{Acknowledgements}

This paper is the result of several research visits of both authors, which is supported in part of the research programme VENI with project number 13073, which is (partly) financed by the Netherlands Organisation for Scientific Research (NWO). In addition, the authors gratefully acknowledge the fruitful discussions with Jurgen van Zundert, Maurice Heemels, Dip Goswami, and Martijn Koedam for resource-efficient control, as part of the  Robust Cyber-Physical Systems (RCPS) project (no. 12694).

\bibliographystyle{abbrv}

\end{document}